\def\final{1}
\def\llncs{0}
\def\colorson{0}
\newcommand{\mnote}[1]{}
\newcommand{\mnote}[1]{\marginpar{\tiny \sf {#1}}}
\newenvironment{todo}{\noindent
\sf \footnotesize \textcolor{blue}{To go here}:
\begin{CompactItemize}\color{blue}}
{\color{black}\end{CompactItemize}\rm \normalsize}
\newtheorem{theorem}{Theorem}[section]
\newtheorem{lemma}[theorem]{Lemma}
\newtheorem{definition}{Definition}[section]
\theoremstyle{definition}
\theoremstyle{remark}
\newtheorem{myremark}{Remark} [section]
\newenvironment{remark}{\begin{myremark}}{$\diamondsuit$\end{myremark}}
\newtheorem{myexample}{Example}
\spnewtheorem{protocol}{Protocol}{\bfseries}{\rmfamily}
\spnewtheorem{algm}{Algorithm}{\bfseries}{\rmfamily}
\spnewtheorem{fact}{Fact}{\bfseries}{\rmfamily}
\spnewtheorem{myclaim}{Claim}{\bfseries}{\itshape}
\newcommand{\lemref}[1]{Lemma~\ref{lem:#1}}
\newcommand{\figref}[1]{Figure~\ref{fig:#1}}
\newcommand{\comment}[1]{}
\newcommand{\ignore}[1]{}
\newenvironment{CompactItemize}{
  \vspace{-4pt}
 \begin{list}{$\bullet$}{%
      \setlength{\leftmargin}{12pt}%
      \setlength{\itemsep}{-2pt}
      }}
  {\end{list}}
\newcommand{\paren}[1]{\left( {#1} \right)}
\newcommand{\re}{\mathbb{R}}
\newcommand{\eps}{\epsilon}
\newcommand{\expe}[2]{\operatorname{E}_{{#1}} \left( {#2} \right)}
\newcommand{\var}[2]{\operatorname{Var}_{{#1}} \left( {#2} \right)}
\newcommand{\bias}{\mathsf{bias}}
\newcommand{\defeq}{\stackrel{{\mbox{\tiny def}}}{=}}
\newcommand{\beq}{\begin{equation}}
\newcommand{\eeq}{\end{equation}}
\newcommand{\bml}{{\begin{multline}}}
\newcommand{\eml}{{\end{multline}}}
\definecolor{brown}{rgb}{0.4,0.2,0}
\definecolor{pink}{rgb}{.9,0,0.3}
\definecolor{darkgreen}{rgb}{0,.3,0}
\definecolor{lightgray}{gray}{.8}
\definecolor{lightrose}{rgb}{1, .7, .7}
\definecolor{lightskyblue}{rgb}{.54,.812,1}
\definecolor{aquamarine}{rgb}{.442,1,.812}
\definecolor{lightgreen}{rgb}{.532,.942,.532}
\definecolor{lightpink}{rgb}{1,.5,0.63}
\definecolor{thistle}{rgb}{0.9,0.749,0.9}
\definecolor{lightblue}{rgb}{0.67843,0.847,0.9}
\newcommand{\etal}{{\em et al.}}
\begin{document}

\title{Efficient, Differentially Private Point Estimators}
\author{Adam Smith\thanks{Department of Computer Science and Engineering, The Pennsylvania State University, University Park, PA, USA. Email: {\tt  asmith at cse.psu.edu}. This work was partly supported by National Science Foundation award \#0729171.}}

\maketitle

\begin{abstract}
{\em Differential privacy} is a recent notion of privacy for statistical databases that provides rigorous, meaningful confidentiality guarantees, even in the presence of an attacker with access to arbitrary side information.\mnote{add sentences here...}

We show that for a large class of parametric probability models, one can construct a differentially private  estimator whose distribution converges to that of the maximum likelihood estimator. In particular, it is efficient and asymptotically unbiased.
This result provides (further) compelling evidence that  rigorous notions of privacy in statistical databases can be consistent with statistically valid inference. 
\end{abstract}

\section{Introduction}

Privacy is a fundamental problem in modern data analysis. 
Increasing volumes of personal and sensitive data are collected  by government and other organizations. The potential social benefits of analyzing these databases are significant; at the same time, releasing information from repositories of sensitive data can cause devastating damage to privacy. The challenge is to discover and release global characteristics of these databases, without compromising the privacy of the individuals whose data they contain. 

There is a vast body of work on this problem in statistics and computer science. However, until recently, most schemes proposed in the literature lacked rigorous analysis of privacy and utility. Few works even formulated a precise definition of their schemes' conjectured properties.

In this paper, we explore the potential of {\em differential privacy}, 
a definition of privacy due to Dwork \etal\ \cite{DMNS06} 
that emerged from a line of work in
cryptography \cite{DiNi03,DwNi04,BDMN05}.  This notion of privacy
makes assumptions neither about what kind of attack might be
perpetrated based on the released statistics, nor about what
additional information the attacker might possess. It resolves a
number of problems present in previous attempts at a
definition. In particular, it provides precise guarantees in the presence of arbitrary side information available to the adversary but unknown to the organization that is releasing information. 

Specifically, 
 we show that for well-behaved parametric probability models, one can construct a differentially-private estimator whose distribution converges to that of the MLE. In particular, it is efficient and asymptotically unbiased.
This provides (further) strong evidence that  rigorous notions of database privacy can be consistent with statistically valid inference.

\paragraph{Differential Privacy}

The problem of identifying which information in the database is safe to release
has generated a vast body of work, both in
statistics and computer science.
Until recently, there were two
nearly disjoint fields studying the data privacy problem:
``statistical disclosure limitation'' (also known as ``data confidentiality''), initiated by the statistics community
in 1960s, and 
``privacy-preserving data mining'',
%
%
 active in the database community during the 1980's and rekindled at the turn of the 21st century by researchers in data mining.  The literature in both fields is far too vast to survey here. For some pointers to the broader literature  in statistics, see~\cite{W65,Dalenius1977,Dalenius1982,Diaconis1998,Slavkovic2004,
Fienberg2004a}. For early work in computer science, see the survey in \cite{AW89}. Recent work in data mining was started by \cite{AS00} and led to an explosion of literature. For (partial) references, see~\cite{CliftonKVLZ02,Gehrke06tutorial,Sweeney05}.

However, the schemes proposed in these fields lack rigorous analysis of privacy. 
Typically, the schemes have either no formal privacy guarantees or
ensure security only against a specific suite of attacks. This
leaves them potentially vulnerable to unforeseen attacks, and makes
it difficult to compare different schemes because each of them is
basically solving a different problem.

A recent line of work~\cite{DiNi03,EGS03,DwNi04,BDMN05,DMNS06,DKMMN06,Dwork06,NRS07,DMT07,MT07,BCDKMT07,RHS07,BLR08,DY08,KS08,KLNRSf08}, called {\em private data analysis}, seeks to place data privacy on more firm theoretical foundations and has been  successful at formulating a strong, yet attainable privacy definition.
The intuition behind the definition, which is due to Dwork~\etal~\cite{DMNS06}, is that whether an individual supplies her actual or fake information has almost no effect on the outcome of the analysis. Roughly, a randomized algorithm that takes sensitive data as input and outputs a product for publication is considered privacy-preserving if databases that differ in one entry induce nearby distributions on its outcomes (see below for a precise definition). 

A number of techniques for designing differentially private algorithms are now known. 
These are surveyed by Dwork~\cite{Dwork07,Dwork08} and Nissim~\cite{Nissim08}. 

\paragraph{Our Contribution}

This paper provides a qualitatively different result from previous work, in that it relates the perturbation added for differential privacy to the provably optimal error of point estimators. For a broad class of problems, we show that differential privacy can be provided at an asymptotically vanishing cost to accuracy.

\newcommand{\infrac}[2]{(#1)/(#2)}
Specifically, we show a modification to the maximum likelihood estimator for parametric models which satisfies differential privacy and is {\em asymptotically efficient}, meaning that the averaged squared error of the estimator is $\infrac{1+o(1)}{nI(\theta)}$, where $n$ is the number of samples in the input, $I(\theta)$ denotes the Fisher information of $f$ at $\theta$ and  and $o(1)$ denotes a function that tends to zero as $n$ tends to infinity. Differential privacy is quantified by a parameter $\eps>0$ which measures information leakage; our estimator satisfies differential privacy with  $\lim_{n\to\infty}\eps=0$.

\section{Definitions}

Consider a parameter estimation problem defined by a model $f(x;\theta)$ where $\theta$ is a real-valued vector in a bounded space $\mathbf{\Theta} \subseteq \re^{p}$ of diameter $\Lambda$, and $x$ takes values in a $D$ (typically, either a real vector space or a finite, discrete set). 

We will generally use the following notational convention: capital latin letters ($X$, $T$, etc) refer to {\em random} variables or processes. Their lower case analogues refer to fixed, deterministic values of these random objects (i.e. scalars, vectors, or functions).

Given i.i.d. random variables $X=(X_{1},...,X_{n})$ drawn according to the distribution $f(\cdot;\theta)$, we would like to estimate $\theta$ using an estimator  $t$ that takes as input the data $x$ as well an additional, independent source of randomness $R$ (used, in our case, for perturbation):

$$\begin{array}{ccccl}\theta & \to & X & \to & t(X,R) = T(X) \\ &  &  &  & \ \uparrow \\ &  &  &  & R 
\end{array}$$

Even for a {\em fixed} input $x=(x_{1},...,x_{n})\in D^{n}$, the estimator $T(x)= t(x,R)$ is a random variable distributed in the parameter space $\re^{p}$. For example, it might consist of a deterministic function value that is perturbed using additive random noise, or it might consist of a sample from a posterior distribution constructed based on $x$. We will use the capital letter $X$ to denote the random variable, and lower case $x$ to denote a specific value in $D^{n}$. Thus, the random variable $T(X)$ is generated from two sources of randomness: the samples $X$ and the random bits used by $T$.

\paragraph{Differential Privacy}

We say two {\em fixed} data sets $x$ and $x'$ in $D^{n}$ are \emph{neighbors} if $x$ and $x'$ agree in all but one position, that is for some $i$, 
$${x=(x_{1},...,x_{i-1}, x_{i}, x_{i+1},...,x_{n}) \atop x'=(x_{1},...,x_{i-1},x_{i}', x_{i+1},...,x_{n})}$$

Differential  privacy compares the distributions of $T(x)= t(x,R)$ and $T(x')= t(x',R)$ corresponding to neighboring data sets $x,x'$. It requires that for all possible pairs of neighboring data sets, the corresponding distributions be close:

\begin{definition}\label{def:diffep}
A randomized algorithm $T(\cdot)$ is \emph{$\eps$-differentially private} if for all neighboring pairs of databases $x$ and $x'$, and for all measurable subsets of outputs (events) $S$:
$$ \Pr(T(x) \in S) \leq e^{\eps}\times \Pr(T(x')\in S)\,.$$
\end{definition}

This condition states that on single point in the input set can significantly influence the distribution of the estimator. 
Note that the privacy condition {\em makes no reference to a distribution on $x$}. It is a ``worst-case'' notion of privacy that provides a guarantee even when our modeling of the distribution on $x$ is incorrect. 

Given two probability measures $p$ and $q$ on a space $\Omega$, we can define the multiplicative distance between $p$ and $q$ to be $$d_\times(p,q) \defeq \ln\left(\sup_{S\in \Omega} \max(\frac{p(S)}{q(S)}, \frac{q(S)}{p(S)} ) \right)= \sup_{S\in \Omega}\left( \ln\left| \frac{p(S)}{q(S)} \right|\right)\, .$$
(We say $d_\times(p,q)=\infty$ when the supremum above doesn't exist.)
Thus, $\eps$-differential privacy requires that, for all fixed  neighboring data sets $x$ and $x'$, the multiplicative distance between (the distributions of) $T(x)$ and $T(x')$ be at most $\eps$. 
The exact choice of distance function significantly affects the practical meaning of differential privacy---see Section 4, Remark 2 in \cite{DMNS06} and \cite{KS08} for discussion. 

\paragraph{The MLE and Efficiency} Many methods exist to measure the quality of a point estimator. In this paper, we consider the expected squared deviation from the real parameter $\theta$. For a one-dimensional parameter ($p=1$), this can be written:
$$J_{T}(\theta) \defeq \expe{\theta}{(T(X)-\theta)^{2}} $$

The notation $\expe{\theta}{...}$ refers to the fact that $X$ is drawn i.i.d. according to $f(\cdot;\theta)$. 
If $T(X)$ is unbiased, then $J_{T}(\theta)$ is simply the variance $\var{\theta}{T(X)}$. Note that all these notations are equally well-defined for a randomized estimator $T(x)=t(x,R)$. The expectation is then also taken over the choice of $R$, i.e. $ J_{T}(\theta) = \expe{\theta}{(t(X,R)-\theta)^{2}}\,.$


(Mean squared error can be defined analogously for higher-dimensional parameter vectors. 
For simplicity we focus here on the one-dimensional case. The development of a higher-dimensional analogue is identical, as long as $p$ is constant with respect to $n$. \mnote{state corresponding result later?})

\newcommand{\mle}{{\hat \theta_{\text{\sc mle}}}}
The maximum likelihood estimator $\mle(x)$ returns a value $\hat \theta$ that maximizes the likelihood  function $L(\theta) = \prod_i f(x_i;\theta)$, if such a maximum exists. It is a classic result that, for well-behaved parametric families, the $\mle$ exists with high probability and is asymptotically normal, centered around the true value $\theta$. Moreover, its expected square error is given by the inverse of Fisher information at $\theta$, 
$$I_f(\theta)\defeq \expe{\theta}{\left[\tfrac{\partial}{\partial \theta} \ln(f(X_1;\theta))\right]^2} \,.
$$ 

\newcommand{\inprob}{{\stackrel{\tiny P}{\longrightarrow}}}
\newcommand{\indist}{{\stackrel{\tiny \mathcal{D}}{\longrightarrow}}}
\newcommand{\bc} {\hat\theta_{bc}}
\renewcommand{\bias}{b}
\newcommand{\bmle}{\bias_{\text{\sc mle}}}

\begin{lemma}\label{lem:mle}
Under appropriate regularity conditions, the MLE converges in distribution to a Gaussian centered at $\theta$, that is  
$\sqrt{n}\cdot (\mle-\theta) \ \indist\ N\left(0,\frac{1}{I_{f}(\theta)} \right)\, .$ 
Moreover, 
$J_{\mle}(\theta)=\frac{1+o(1)}{nI_{f}(\theta)}$, where $o(1)$ denotes a function of $n$ that tends to zero as $n$ tends to infinity. 
\end{lemma}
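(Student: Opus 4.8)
The plan is to follow the classical asymptotic theory of maximum likelihood estimation: establish consistency of $\mle$, linearize the score equation by a Taylor expansion about the true parameter, and then combine a central limit theorem for the score with a law of large numbers for the observed information. The distributional claim and the mean-squared-error claim require different amounts of work, and the latter is the genuinely delicate one.

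First I would prove consistency, $\mle \inprob \theta$. Writing the normalized log-likelihood as $\tfrac1n \ell(\vartheta) = \tfrac1n \sum_i \ln f(X_i;\vartheta)$, the law of large numbers shows it converges pointwise to $\expe{\theta}{\ln f(X_1;\vartheta)}$, which is uniquely maximized at $\vartheta=\theta$ since the gap $\expe{\theta}{\ln f(X_1;\theta)} - \expe{\theta}{\ln f(X_1;\vartheta)}$ is the (nonnegative) Kullback--Leibler divergence between $f(\cdot;\theta)$ and $f(\cdot;\vartheta)$. Under the regularity conditions this convergence is uniform enough on $\mathbf{\Theta}$ that the maximizer $\mle$ of the left side converges to the maximizer $\theta$ of the limit.

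Next, since $\mle$ lies in the interior of $\mathbf{\Theta}$ with probability tending to $1$ and the likelihood is smooth, it satisfies the score equation $\ell'(\mle)=0$. Taylor-expanding the score about $\theta$ gives
\[
0 = \ell'(\mle) = \ell'(\theta) + \ell''(\theta)\,(\mle-\theta) + \tfrac12\,\ell'''(\tilde\theta)\,(\mle-\theta)^2
\]
for some $\tilde\theta$ between $\theta$ and $\mle$, and solving for the scaled error yields
\[
\sqrt{n}\,(\mle-\theta) = \frac{\tfrac{1}{\sqrt n}\,\ell'(\theta)}{-\tfrac1n\,\ell''(\theta)\;-\;\tfrac1{2n}\,\ell'''(\tilde\theta)\,(\mle-\theta)}\,.
\]
I would then treat numerator and denominator separately. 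The numerator $\tfrac{1}{\sqrt n}\ell'(\theta)=\tfrac{1}{\sqrt n}\sum_i \tfrac{\partial}{\partial\theta}\ln f(X_i;\theta)$ is a normalized sum of i.i.d. terms, each with mean zero and variance $I_f(\theta)$ (the mean-zero property and the identification of the per-sample variance with $I_f(\theta)$ are the basic regularity conditions), so the central limit theorem gives $\tfrac{1}{\sqrt n}\ell'(\theta)\indist N(0,I_f(\theta))$. For the denominator, the information identity $\expe{\theta}{-\tfrac{\partial^2}{\partial\theta^2}\ln f(X_1;\theta)}=I_f(\theta)$ together with the law of large numbers gives $-\tfrac1n\ell''(\theta)\inprob I_f(\theta)$, while the remaining term is negligible because $\tfrac1n\ell'''(\tilde\theta)$ is bounded in probability (a domination bound on the third derivative) and $(\mle-\theta)\inprob 0$ by consistency. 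Slutsky's theorem then turns the ratio into $N(0,I_f(\theta))/I_f(\theta)=N(0,1/I_f(\theta))$, the asserted limit.

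The main obstacle is the second assertion, $J_{\mle}(\theta)=\expe{\theta}{(\mle-\theta)^2}=\tfrac{1+o(1)}{nI_f(\theta)}$, because convergence in distribution does \emph{not} by itself imply convergence of second moments. What is required is uniform integrability of the family $\{\,n(\mle-\theta)^2\,\}_n$: the distributional limit already pins down the candidate limiting second moment as $1/I_f(\theta)$ (the variance of the limiting Gaussian), and uniform integrability is exactly what licenses interchanging the limit with the expectation to conclude $n\,J_{\mle}(\theta)\to 1/I_f(\theta)$. Establishing this is where the ``appropriate regularity conditions'' do their real work---typically one combines the consistency rate with domination (envelope) bounds on the derivatives of $\ln f$---and it is the step I expect to require the most care. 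Since the lemma is classical (it appears, under various hypotheses, in standard texts such as Lehmann and Casella's \emph{Theory of Point Estimation}), an economical alternative is simply to invoke that theorem under its stated regularity assumptions.
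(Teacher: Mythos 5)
Your proposal is correct: the paper itself offers no proof of this lemma, treating it as a classical fact about MLE asymptotics under ``appropriate regularity conditions'' (the standard references, e.g.\ Lehmann and Casella, are exactly what is implicitly being invoked), which matches the fallback you suggest in your final sentence. Your reconstruction of the classical argument---consistency via Kullback--Leibler, Taylor expansion of the score, CLT plus Slutsky, and, crucially, the observation that the second-moment claim needs uniform integrability of $n(\mle-\theta)^2$ rather than mere convergence in distribution---is sound and correctly isolates the one genuinely delicate step (helped here by the boundedness of $\mathbf{\Theta}$).
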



The MLE has optimal among unbiased estimators; estimators that match this bound are called {\em efficient}.

\begin{definition}
An estimator $T$ is \emph{asymptotically efficient} for a model $f(\cdot;\cdot)$ if, for all $\theta$, the expected squared error converges to that of the MLE, that is, for all $\theta\in \mathbf{\Theta}$, \ \ $ J_{T}(\theta) \leq  \dfrac{1+o(1)}{nI_{f}(\theta)}\, .$
\end{definition}

\paragraph{Bias Correction} 

The asymptotic efficiency of the MLE implies that its bias, $\bmle(\theta) \defeq \expe{\theta}{\mle-\theta}$, goes to zero more quickly than  $1/\sqrt{n}$. However, in our main result, we will need an estimator with much lower bias. This can be obtained via a (standard) process known as bias correction. 

Under appropriate regularity assumptions, we can describe the bias of MLE  precisely, namely 
$$\expe{\theta}{\mle - \theta } = \frac{b_1(\theta)}{n} + O\left(\frac1{n^{3/2}}\right)\,,$$
where $b_1(\theta)$ has a uniformly bounded derivative (see, for example,  discussions in Cox and Hinkley \cite{CL74}, Firth \cite{Firth93}, and Li \cite{Li98}). Several methods exist for correcting this bias. The simplest is to subtract off an estimate of the leading term, using $b_1(\mle)$ to estimate $b_1(\theta)$; the result is called the {\em bias-corrected MLE}, 
$$\bc \defeq \mle -b_1(\mle)/n\,.$$

\begin{lemma}\label{lem:bc}
The bias-corrected MLE  $\bc = \mle - b_1(\mle)/n$, converges at the same rate as the MLE but with lower bias, namely, $$\displaystyle\sqrt{n}\cdot (\bc-\theta) \ \indist\ N(0,\tfrac1{I_{f}(\theta)}) \qquad \text{and}\qquad \bias_{bc} \defeq \expe{\theta}{\bc-\theta}=O(n^{-3/2})\,.$$
\end{lemma}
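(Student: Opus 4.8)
The plan is to derive both assertions from the two facts already in hand for the MLE: its asymptotic normality together with the mean-squared-error statement of \lemref{mle}, and the precise bias expansion $\expe{\theta}{\mle-\theta}=b_1(\theta)/n + O(n^{-3/2})$, in which $b_1$ has a uniformly bounded derivative. The correction term $b_1(\mle)/n$ is small enough not to disturb the first-order (Gaussian) behavior, yet exactly large enough to cancel the leading $1/n$ bias; making these two statements precise is the whole content of the lemma.

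For the distributional claim I would write
$$\sqrt{n}\,(\bc-\theta)=\sqrt{n}\,(\mle-\theta)-\tfrac{1}{\sqrt n}\,b_1(\mle),$$
and argue that the second term is asymptotically negligible. Asymptotic normality of $\mle$ forces consistency, $\mle \inprob \theta$; since $b_1$ is continuous (having a bounded derivative), the continuous mapping theorem gives $b_1(\mle)\inprob b_1(\theta)$, a constant, so $b_1(\mle)/\sqrt n \inprob 0$. The first term converges in distribution to $N(0,1/I_f(\theta))$ by \lemref{mle}, and Slutsky's theorem then yields $\sqrt{n}\,(\bc-\theta)\indist N(0,1/I_f(\theta))$, the first assertion.

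For the bias I would expand
$$\expe{\theta}{\bc-\theta}=\expe{\theta}{\mle-\theta}-\tfrac1n\,\expe{\theta}{b_1(\mle)}.$$
The first term is $b_1(\theta)/n+O(n^{-3/2})$ by hypothesis. For the second, Taylor's theorem with the mean-value form of the remainder and the uniform bound $|b_1'|\le C$ give the Lipschitz estimate $|b_1(\mle)-b_1(\theta)|\le C\,|\mle-\theta|$; taking expectations and applying the Cauchy--Schwarz (or Jensen) inequality,
$$\bigl|\expe{\theta}{b_1(\mle)}-b_1(\theta)\bigr|\le C\,\expe{\theta}{|\mle-\theta|}\le C\sqrt{J_{\mle}(\theta)}=O(n^{-1/2}),$$
using $J_{\mle}(\theta)=(1+o(1))/(nI_f(\theta))$ from \lemref{mle}. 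Hence $\tfrac1n\,\expe{\theta}{b_1(\mle)}=b_1(\theta)/n+O(n^{-3/2})$, and subtracting from the first term the leading $b_1(\theta)/n$ contributions cancel, leaving $\bias_{bc}=O(n^{-3/2})$.

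The one delicate point is that the bias argument must control $\expe{\theta}{b_1(\mle)}$ at the level of expectations, not merely in probability: convergence in distribution alone would not suffice here, so it is essential both to have the Lipschitz bound on $b_1$ (from the bounded-derivative hypothesis) and a genuine first-moment bound on $\mle-\theta$. The latter is exactly what the mean-squared-error clause of \lemref{mle} supplies through Cauchy--Schwarz, so under the stated regularity conditions no further work is needed and the remaining manipulations are routine. I would, however, keep in mind that these moment statements implicitly require the regularity conditions to guarantee that $\mle$ is well-defined and suitably bounded (e.g. confined to the diameter-$\Lambda$ parameter space) outside an event of negligible probability, so that the expectations above are finite and the $O(\cdot)$ bounds hold as claimed.
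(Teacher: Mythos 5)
Your proof is correct, but there is nothing in the paper to compare it against: the paper states \lemref{bc} without proof, treating it as a standard fact from the statistics literature on bias correction (it cites Cox and Hinkley, Firth, and Li only for the underlying expansion $\expe{\theta}{\mle-\theta}=b_1(\theta)/n+O(n^{-3/2})$, and the lemma itself is quoted as the known consequence). Your argument is a valid, self-contained derivation from exactly the ingredients the paper does supply: Slutsky plus the continuous mapping theorem for the distributional claim, and the Lipschitz bound on $b_1$ combined with the Cauchy--Schwarz estimate $\expe{\theta}{|\mle-\theta|}\le\sqrt{J_{\mle}(\theta)}=O(n^{-1/2})$ for the bias claim, so that $\tfrac1n\expe{\theta}{b_1(\mle)}=b_1(\theta)/n+O(n^{-3/2})$ and the leading $b_1(\theta)/n$ terms cancel. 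You also correctly isolate the one genuine subtlety---the bias statement requires control of expectations, not merely weak convergence, which is why the mean-squared-error clause of \lemref{mle} (rather than asymptotic normality alone) is the right tool. What your write-up buys over the paper's citation is precisely this accounting of which hypothesis is used where; what the citation buys is brevity and access to sharper classical expansions (the literature expands $\expe{\theta}{b_1(\mle)}$ to higher order, though your cruder $O(n^{-1/2})$ bound is all that is needed for the stated $O(n^{-3/2})$ conclusion). The caveat you flag yourself---that the moment bounds presuppose the unstated ``appropriate regularity conditions'' and a well-defined, bounded $\mle$---is unavoidable given that the paper never specifies those conditions, so it is not a gap in your argument.
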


\section{A Private, Efficient Estimator}

We can now state our main result:
\begin{theorem}
Under appropriate regularity conditions, there exists a (randomized) estimator $T$ which is asymptotically efficient and $\eps$- differentially private, where $\lim_{n\to\infty}\eps = 0$.
\end{theorem}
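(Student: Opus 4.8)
The plan is to combine the bias-corrected MLE of \lemref{bc} with the \emph{sample-and-aggregate} framework of Nissim, Raskhodnikova and Smith~\cite{NRS07}. A single data point can move the full-sample MLE arbitrarily, so its worst-case (global) sensitivity need not vanish and direct perturbation fails. If instead we split the sample into $m$ blocks and compute a block estimate on each, then changing one entry affects only one block; averaging the block estimates both reconstructs the efficient variance and admits a low-sensitivity, hence low-noise, private release. Concretely, partition $X$ into $m=m(n)$ disjoint blocks of size $n/m$, and on block $j$ compute the bias-corrected MLE $z_j$ of that block (setting $z_j$ to an arbitrary fixed point of $\mathbf{\Theta}$ on the rare blocks where it fails to exist). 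Applying \lemref{bc} at sample size $n/m$, each $z_j$ is asymptotically $N\!\left(\theta,\tfrac{m}{nI_f(\theta)}\right)$ with bias $O\!\left((m/n)^{3/2}\right)$; by the regularity conditions the $z_j$ concentrate, so that all of them lie in a window of half-width $w\defeq C\sqrt{(m\log n)/n}$ about $\theta$, for a suitable constant $C$, except with probability $o(1/n)$.

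The estimator $T$ has two stages, composed in the sense of differential privacy. A crude \emph{location} stage spends budget $\eps_1$ to output a center $c$ lying within $O(w)$ of $\theta$ with probability $1-o(1/n)$. Since $\mathbf{\Theta}$ has bounded diameter $\Lambda$, this is a standard private histogram: partition $\mathbf{\Theta}$ into $\Theta(\Lambda/w)$ cells of width $w$, add Laplace noise of scale $1/\eps_1$ to each cell count, and return the center of the densest cell. Because the $z_j$ cluster in $O(1)$ adjacent cells, the true count ($\approx m$) dominates the noise ($O(\eps_1^{-1}\log n)$ after a union bound over cells), so $c$ is recovered to accuracy $O(w)$. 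Then, letting $W\defeq[c-w,c+w]$ and $\Pi_W$ denote clamping to $W$, the \emph{averaging} stage releases
$$T \defeq \Pi_{\mathbf{\Theta}}\!\left(\frac1m\sum_{j=1}^m \Pi_W(z_j) \;+\; \Lap\!\left(\tfrac{2w}{m\eps_2}\right)\right).$$
For any fixed $c$, changing one data point changes one $\Pi_W(z_j)$ by at most $2w$, so the clamped average has global sensitivity $2w/m$ and the Laplace noise gives $\eps_2$-differential privacy; by composition (immediate from \defref{diffep}) the whole mechanism is $(\eps_1+\eps_2)$-differentially private for \emph{all} datasets, and clamping the output to $\mathbf{\Theta}$ is privacy-preserving post-processing.

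It remains to verify efficiency and choose parameters. Under the model, both clampings are inert except on an event of probability $o(1/n)$, on which the squared error is at most $\Lambda^2$ and so contributes only $o(\Lambda^2/n)=o(1/n)$ to $J_T(\theta)$. On the good event $\tfrac1m\sum_j z_j$ is the average of $m$ independent estimates of variance $\tfrac{m}{nI_f(\theta)}$, hence has variance $\tfrac{1+o(1)}{nI_f(\theta)}$ and bias $O((m/n)^{3/2})$, while the Laplace term adds variance $\Theta\!\left(\tfrac{\log n}{nm\eps_2^2}\right)$. Taking, say, $m=\sqrt{n}$ and $\eps=\eps_1+\eps_2=n^{-1/8}$ makes the squared bias $O((m/n)^3)=o(1/n)$, the noise variance $o(1/n)$ (since $m\eps_2^2/\log n\to\infty$), and $\eps\to0$; hence $J_T(\theta)\le\tfrac{1+o(1)}{nI_f(\theta)}$ and $T$ is asymptotically efficient.

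The main obstacle is the tension governing the window $W$. Robustness---preventing an atypical or adversarial block from distorting the average---pushes $w$ larger, whereas small added noise requires the sensitivity $2w/m$, and hence $w$, to be small. Resolving this requires simultaneously (i) a finite-$n$ concentration argument placing all block estimates in a window of width $\tilde O(\sqrt{m/n})$, (ii) the bias correction of \lemref{bc}, which lets $m$ grow polynomially without the per-block bias overwhelming the $1/(nI_f(\theta))$ budget, and (iii) a location stage accurate to $O(w)$ yet cheap in privacy. Checking that these constraints are mutually satisfiable under the regularity conditions---as witnessed by the choice $m=\sqrt n$, $\eps=n^{-1/8}$ above---is the crux of the argument.
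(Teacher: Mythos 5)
Your construction is correct in outline, but it takes a genuinely different route from the paper. The paper also splits the data into $k$ blocks, averages the bias-corrected block MLEs, and perturbs — but it adds a single Laplace noise of scale $\Lambda/(k\eps)$, calibrated to the diameter $\Lambda$ of the \emph{whole} parameter space: since every block estimate lies in $\mathbf{\Theta}$, changing one point moves the average by at most $\Lambda/k$, so privacy needs nothing beyond boundedness, and efficiency follows purely from the bias/variance bounds of \lemref{mle} and \lemref{bc}; no concentration of the block estimates is ever invoked. The price is that the noise variance $\Theta\bigl(\Lambda^2/(k^2\eps^2)\bigr)$ forces $k=\omega(\sqrt n/\eps)$ while the per-block bias forces $k=o(n^{2/3})$, so the paper's estimator is efficient only for $\eps=\omega(n^{-1/6})$, with the choice $k=\lceil n^{3/5}\Lambda^{2/5}\eps^{-2/5}\rceil$. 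Your two-stage scheme --- a private histogram to locate a window of width $\tilde O(\sqrt{m/n})$, then noise scaled to that window rather than to $\Lambda$ --- buys a much smaller noise term and hence tolerates smaller $\eps$ (your $m=\sqrt n$, $\eps=n^{-1/8}$ lies outside the paper's analysis, and your constraints in fact allow $\eps$ down to nearly $n^{-1/3}$). This is essentially the refinement that appears in later work on private parametric estimation.

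Two caveats. First, your step (i) is a real additional hypothesis, not a consequence of \lemref{mle} or \lemref{bc}: asymptotic normality and an $O(n^{-3/2})$ bias give no finite-$n$ tail control, and you need each block estimate to deviate from $\theta$ by more than $C\sqrt{(m\log n)/n}$ with probability $o(n^{-3/2})$ per block (so that the union bound over $m$ blocks lands at the $o(1/n)$ level your MSE accounting requires). That is a genuinely stronger regularity assumption (sub-Gaussian or exponential tails for the block MLE), and it should be folded explicitly into the ``appropriate regularity conditions''; the paper's calibration to $\Lambda$ is precisely what lets it avoid any such requirement. Second, a constant-factor slip: the densest-cell center $c$ is only guaranteed to lie within $O(w)$ of $\theta$ with a constant exceeding $1$ (the winning cell need only contain \emph{some} $z_j$), so the clamping window must be $[c-Cw,c+Cw]$ for a sufficiently large constant $C$, not $[c-w,c+w]$; as written, $\Pi_W$ need not be inert on the good event and would bias the average. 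Both issues are repairable, but the first is exactly the price of your sharper noise calibration.
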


More precisely, the construction takes as input the parameter $\eps$ and produces an estimator $T$ with mean squared error $\frac{1}{nI_f(\theta)}(1 + O(n^{-1/5} \eps^{-6/5} ))$. Thus, as long as $\eps$ goes to 0 more slowly than $n^{-1/6}$, the estimator will be asymptotically efficient.

The idea is to apply the ``sample-and-aggregate'' method of \cite{NRS07}, similar in spirit to the parametric bootstrap.  The procedure is quite general and can be instantiated in several variants. We present a particular version which is sufficient to prove our main theorem.

The estimator $T^*$ takes the data $x$ as well as a parameter $\eps>0$ (which measures information leakage) and a positive integer $k$ (to be determined later). The idea is to break the input into $k$ blocks of $n/k$ points each, compute the (bias-corrected) MLE on each block, and release the average of these estimates plus some small additive perturbation. The procedure is given in Algorithm 1 and illustrated in \figref{samp-agg}.

\begin{algorithm}[h] 
\caption{On input $x=(x_1,...,x_n) \in D^n$, $\eps>0$ and $k\in\mathbb{N}$:}
\begin{algorithmic}[1]
\STATE Arbitrarily divide the input $x$ into $k$ disjoint sets $B_1,...,B_k$ of $t=\frac n k$ points. \\ We call these $k$ sets the {\em blocks} of the input.
\FOR {each block $B_j = \{x_{(j-1)t +1},...,x_{jt}\}$, }
\STATE Apply the bias corrected MLE $\bc $ to obtain an estimate $z_j = \bc(x_{(j-1)t+1},...,x_{jt})$.
\ENDFOR
\STATE Compute the average estimate: $\bar z = \tfrac{1}{k}\sum_{j=1}^k z_j$.

\STATE Draw a random observation $R$ from a double-exponential (Laplace) distribution with standard deviation $\sqrt{2}\cdot \Lambda / (k\eps)$,
that is,  draw $Y\sim {\sf Lap}\paren{\frac\Lambda{k\eps}}$ where  ${\sf Lap}(\lambda)$ is the  distribution on $\re$ with density $h(y) = \tfrac1{2\lambda} e^{y/\lambda}$. (Recall that $\Lambda$ is the diameter of the parameter space $\mathbf\Theta$.)

\STATE Output $T^*=\bar z +R$.
\end{algorithmic}
\end{algorithm}



The resulting estimator has the form 
\begin{equation}\label{eq:avg}
T^{*}(x) \defeq \paren{\frac{1}{k}\sum_{i=1}^{k} \bc\paren{x_{(i-1)t+1},...,x_{it}} } + {\sf Lap}\paren{\frac\Lambda{k\eps}}\,
\end{equation}

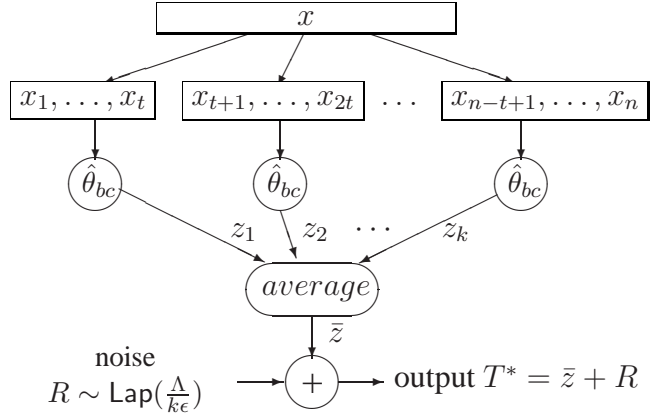
\begin{figure}[tb]
\begin{center}

\newenvironment{overlay}[1]{}{}
\setlength{\unitlength}{1pt}
\begin{picture}(320,160)
\put(156,147){\vector(-3,-1){56}} \put(176,147){\vector(-1,-2){10}}
\put(200,147){\vector(3,-1){56}}

\multiput(97,115)(70,0){2}{\vector(0,-1){16}}
\put(258,115){\vector(0,-1){16}}

\put(106,88){\vector(2,-1){55}} 
\put(167,80){\vector(1,-3){6}}
\put(249,86){\vector(-2,-1){52}}

\put(120,150){\fcolorbox{black}{white}{\hspace*{50pt}$x$\hspace*{50pt}}}

\put(65,120){\fcolorbox{black}{white}{\normalsize$x_{1},\dots,x_{t}$}}
\put(130,120){\fcolorbox{black}{white}{\normalsize$x_{t+1},\dots,x_{2t}$}}
\put(205,120){$\dots$}
\put(228,120){\fcolorbox{black}{white}{\normalsize$x_{n-t+1},\dots,x_{n}$}}

\multiput(97,90)(70,0){2}{\circle{20}} \put(258,90){\circle{20}}
\put(179,50){\oval(50,20)}


\multiput(92,87)(70,0){2}{\small $\bc$}
\put(253,87){\small$\bc$}
 \put(160,48){$average$}

\put(148,70){$z_{1}$}
\put(175,70){$z_{2}$}
\put(195,70){$\bf \cdots$}
\put(228,70){$z_{k}$}

\put(185,30){$\bar z$}


 \put(179,40){\vector(0,-1){16}}
 \multiput(151,15)(38,0){2}{\vector(1,0){17}}

 \put(179,15){\circle{20}}
 \put(175,11){\large +}
\put(70,13){\parbox{2.7cm}{\begin{center}\small {noise} $R\sim{\sf Lap}(\frac{\Lambda}{k\eps})$\end{center}}}

 \put(210,13){\normalsize output $T^*=\bar z +R$}
\color{black}
\end{picture}

\caption{The estimator $T^*$. When the number of bins $k $ is $ o(n^{2/3})$ and $\eps$ is not too small, $T^*$ is asymptotically efficient (\lemref{efficiency}).}
\label{fig:samp-agg}
\end{center}
\end{figure}

\begin{lemma}[\cite{BDMN05,NRS07}]
For any choice of the number of blocks $k$, the estimator $T^{*}$ is $\eps$- differentially private. 
\end{lemma}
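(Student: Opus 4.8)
The plan is to recognize $T^*$ as an instance of the Laplace mechanism applied to the deterministic function $g(x) = \bar z = \frac1k \sum_{j=1}^k \bc(B_j)$, and to reduce the whole statement to a bound on the global sensitivity of $g$. First I would record the one fact about $\bc$ that the argument needs: $\bc$ always returns a value in the parameter space $\mathbf{\Theta}$, which has diameter $\Lambda$. Consequently each block estimate $z_j$ lies in an interval of length $\Lambda$, so any two possible values of a single $z_j$ differ by at most $\Lambda$.

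Next I would compute the sensitivity of $g$. Fix neighboring data sets $x, x'$ that differ only in position $i$. Since the blocks $B_1,\dots,B_k$ partition the index set $\{1,\dots,n\}$, index $i$ belongs to exactly one block, say $B_{j}$. Every block other than $B_j$ is identical in $x$ and $x'$, so $z_\ell = z_\ell'$ for $\ell \neq j$, and the two averages differ only through their $j$-th term: $|g(x) - g(x')| = \frac1k |z_j - z_j'| \le \frac{\Lambda}{k}$ by the previous step. Hence the global sensitivity of $g$ is at most $\Delta = \Lambda/k$.

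Finally I would invoke the standard analysis of the Laplace mechanism. Writing $\lambda = \Lambda/(k\eps)$, so that the added noise is ${\sf Lap}(\lambda)$ with density $h(y)=\frac1{2\lambda}e^{-|y|/\lambda}$, the output density of $T^*(x)$ at a point $s$ equals $h(s-g(x))$. For neighboring $x,x'$ the pointwise ratio of output densities is
\[
\frac{h(s-g(x))}{h(s-g(x'))} = \exp\!\left(\frac{|s-g(x')|-|s-g(x)|}{\lambda}\right) \le \exp\!\left(\frac{|g(x)-g(x')|}{\lambda}\right)\le e^{\Delta/\lambda}=e^{\eps},
\]
where the first inequality is the (reverse) triangle inequality and the last uses $\Delta/\lambda = (\Lambda/k)/(\Lambda/(k\eps)) = \eps$. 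Integrating this density bound over any measurable set $S$ yields $\Pr(T^*(x)\in S)\le e^\eps \Pr(T^*(x')\in S)$, which is precisely $\eps$-differential privacy as in \defref{diffep}.

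The only real content is the sensitivity bound, and its crux is the interplay between the partition structure and the averaging factor: a single changed point touches exactly one block, and the $\frac1k$ in front of the sum damps that block's swing, so the worst-case change in $\bar z$ is $\Lambda/k$ even though one $z_j$ can move by the full diameter $\Lambda$. Everything else is the routine Laplace-mechanism calculation, and it holds for \emph{every} $k$ since the noise scale $\Lambda/(k\eps)$ is matched to the sensitivity $\Lambda/k$. The one point requiring care is that $\bc$ be defined to output a value in $\mathbf{\Theta}$ on every block (for instance a fixed default point when the MLE fails to exist), since otherwise $|z_j-z_j'|$ need not be bounded by $\Lambda$.
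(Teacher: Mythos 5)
Your proof is correct and takes essentially the same approach as the paper's: a single changed entry affects exactly one block estimate $z_j$, which moves by at most $\Lambda$, so $\bar z$ moves by at most $\Lambda/k$, and the Laplace density ratio with triangle inequality gives the $e^{\eps}$ bound pointwise and hence over every measurable set. Your explicit framing via global sensitivity and your caveat that $\bc$ must always output a value in $\mathbf{\Theta}$ (even when the MLE fails to exist) are just cleaner statements of what the paper assumes implicitly.
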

The lemma follows from the general techniques developed in \cite{DMNS06,NRS07}; we include a direct proof here for completeness.
\begin{proof}
Fix a  particular value of $x$, and consider the effect of changing a single entry $x_{i}$ to obtain a database $x'$ (for any particular index $i$). At most one of the numbers $z_{j}$ can change, depending on the block which contains $x_{i}$. The number $z_{j}$ that changes can go up or down by at most $\Lambda$, since the parameter takes values in $[0,\Lambda]$. This means that the mean $\bar z$ can change by at most $\Lambda/k$. 

The random variables $T^{*}(x)$ and $T^{*}(x')$ are thus Laplace random variables with identical standard deviations and means differing by at most $\Lambda/k$. Let $h_x$ and $h_{x'}$ be the corresponding density functions. As in \cite{BDMN05,DMNS06}, observe that the ratio  of their densities is at most $e^{\eps}$ since for any real number $y$:
$$\frac{h_x(y)}{h_{x'}(y)} = \frac{\exp(\frac{\eps k}{\Lambda} |y-\bar z|)}
{\exp(\frac{\eps k}{\Lambda} |y-\bar z'|)} \leq \exp(\tfrac{\eps k}{\Lambda}|\bar z - \bar z'|) \leq \exp(\eps) \, .$$
Similarly, the ratio is bounded below by $\exp(-\eps)$. For any measurable set $S\subseteq \re$ with non-zero measure, the ratio $\Pr(T^{*}(x)\in S) / \Pr(T^{*}(x')\in S)$ is thus between $e^{-\eps}$ and $e^{\eps}$. This is exactly the requirement of differential privacy. 
\end{proof}

\begin{lemma}\label{lem:efficiency}
Under the regularity conditions of \lemref{bc}, if  $\eps=\omega(\frac{1}{\sqrt[6] n})$ and $k$ is set appropriately, the estimator $T^{*}$ is  asymptotically unbiased, normal and efficient, that is
$$\sqrt{n} \cdot{T^{*}(X)} \ \indist\ N(\theta,\frac1{I_{f}(\theta)})\quad \text{when }X=X_{1},...,X_{n} \sim f(\cdot,\theta)\text{ are i.i.d.}$$
\end{lemma}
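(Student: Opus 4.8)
The plan is to treat $T^{*}$ as the sum of two independent pieces---the block average $\bar z=\frac1k\sum_{j=1}^{k}z_j$ and the Laplace perturbation $R\sim\Lap(\Lambda/(k\eps))$---and to analyze the mean, the variance, and the limiting law of each separately before recombining. The key structural observation is that, because the blocks $B_1,\dots,B_k$ are disjoint, the $z_j=\bc(B_j)$ are i.i.d.\ copies of the bias-corrected MLE computed on $t=n/k$ samples, so \lemref{bc} applies verbatim to each block: each $z_j$ has mean $\theta+O(t^{-3/2})$ and variance $\frac{1+o(1)}{t\,I_f(\theta)}$, with the $o(1)$ and $O(\cdot)$ controlled as $t\to\infty$.

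First I would compute the mean squared error. Averaging $k$ independent blocks and using $kt=n$ gives $\var{\theta}{\bar z}=\frac{1+o(1)}{n\,I_f(\theta)}$ and $\expe{\theta}{\bar z}-\theta=O(t^{-3/2})$, so the squared bias is $O(t^{-3})=O(k^{3}/n^{3})$, while the independent noise contributes variance $2\Lambda^{2}/(k\eps)^{2}$. Hence
$$J_{T^{*}}(\theta)=\frac{1+o(1)}{n\,I_f(\theta)}+O\!\paren{\frac{k^{3}}{n^{3}}}+\frac{2\Lambda^{2}}{k^{2}\eps^{2}}\,.$$
For asymptotic efficiency both error terms must be $o(1/n)$: the bias term requires $k=o(n^{2/3})$ and the noise term requires $k=\omega(\sqrt n/\eps)$. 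These windows overlap exactly when $\sqrt n/\eps=o(n^{2/3})$, i.e.\ when $\eps=\omega(n^{-1/6})$, which is the hypothesis. Balancing the two by setting $k=\Theta(n^{3/5}\eps^{-2/5})$ makes both equal to $\Theta(n^{-1/5}\eps^{-6/5})$ times the leading term, giving the sharper bound $J_{T^{*}}(\theta)=\frac{1}{n I_f(\theta)}\paren{1+O(n^{-1/5}\eps^{-6/5})}$ quoted after the main theorem.

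For asymptotic normality I would rescale and write
$$\sqrt n\,(\bar z-\theta)=\frac{\sqrt n}{k}\sum_{j=1}^{k}(z_j-\theta)=\frac1{\sqrt k}\sum_{j=1}^{k}W_j,\qquad W_j\defeq\sqrt t\,(z_j-\theta),$$
using $\sqrt n=\sqrt{kt}$. The $W_j$ are i.i.d.\ and, by \lemref{bc}, each converges in distribution to $N(0,1/I_f(\theta))$ as $t\to\infty$. Since the rescaled noise $\sqrt n\,R$ has standard deviation $\sqrt2\,\sqrt n\,\Lambda/(k\eps)\to0$ under $k=\omega(\sqrt n/\eps)$, it vanishes in probability and (by Slutsky) does not affect the limit, so it suffices to show $\frac1{\sqrt k}\sum_j W_j\indist N(0,1/I_f(\theta))$.

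The main obstacle is precisely this last step: it is a central limit theorem for a \emph{triangular array}, since the common law of the $W_j$ itself depends on $k$ through the block size $t=n/k$, and each $W_j$ is only approximately Gaussian. I would verify a Lindeberg--Feller (or Lyapunov) condition, which for the i.i.d.\ normalized sum reduces to showing that the third absolute moments $\expe{\theta}{|W_j|^{3}}$ (equivalently $t^{3/2}\expe{\theta}{|z_j-\theta|^{3}}$) stay bounded uniformly in $t$; this is exactly the kind of uniform-integrability/moment statement that the standard MLE regularity conditions supply alongside convergence in distribution, and is where the ``appropriate regularity conditions'' are consumed. One must also check that the accumulated bias does not shift the mean: $\frac1{\sqrt k}\sum_j\expe{\theta}{W_j}=\sqrt k\cdot O(t^{-1})=O(k^{3/2}/n)\to0$, again precisely under $k=o(n^{2/3})$. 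With the moment bound in hand the triangular-array CLT yields the Gaussian limit, and recombining with the vanishing noise term completes the proof.
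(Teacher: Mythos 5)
Your proposal is correct, and for the mean-squared-error analysis it follows the paper's proof essentially line by line: the same decomposition of $T^{*}$ into the block average $\bar Z$ plus independent Laplace noise, the same three-term error budget $\frac{1+o(1)}{nI_f(\theta)}+O(k^{3}/n^{3})+\Theta\bparen{\Lambda^{2}/(k^{2}\eps^{2})}$, the same pair of constraints $k=o(n^{2/3})$ and $k=\omega(\sqrt n/\eps)$ whose compatibility is exactly $\eps=\omega(n^{-1/6})$, and the same balancing choice $k=\Theta(n^{3/5}\eps^{-2/5})$ giving the relative error $O(n^{-1/5}\eps^{-6/5})$. Where you genuinely go beyond the paper is the distributional claim: the paper's proof establishes only the efficiency (MSE) statement and, in a source-level margin note, concedes that the proof that $\bar Z$ converges to normal is missing. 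You correctly identify why this is not immediate---the summands $W_j=\sqrt t\,(Z_j-\theta)$ form a triangular array whose common law changes with $n$ through the block size $t=n/k$, so \lemref{bc} alone (fixed-law convergence as $t\to\infty$) does not suffice---and you supply the right repair: a Lindeberg--Feller/Lyapunov condition reducible, in the i.i.d.-within-row case, to a uniform-in-$t$ bound on third absolute moments (equivalently, uniform integrability of $W_t^{2}$, which also upgrades the convergence in distribution of each $W_j$ to convergence of its variance), together with the check that the accumulated bias $O(k^{3/2}/n)$ vanishes and a Slutsky step for the rescaled noise $\sqrt n\,R$. This is precisely the argument the published proof would need to be complete, so your write-up is not merely consistent with the paper but strictly stronger; the only caveat is that the uniform third-moment bound is an additional regularity hypothesis beyond the literal statement of \lemref{bc}, which is fair game given the lemma's ``appropriate regularity conditions'' framing but should be stated explicitly as the place where those conditions are consumed.
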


\begin{proof}
We will select $k$ as a function of $n$ and $\eps$. For now, assume that $t=\frac{n}k$ goes to infinity with $n$. By \lemref{mle}, each $Z_{i} =\mle\paren{X_{(i-1)t+1},...,X_{it}}$ converges to normal, and moreover the bias and variance of $Z_i$ can be bounded:
$$\expe{\theta}{Z_{i}} = \theta \pm O\Big(\frac k n\Big)^{3/2}\ \text{and}\ \var{\theta}{\sqrt{\frac n k}\cdot Z_{i}} = \frac{1+o(1)}{I_{f}(\theta)}$$

Consider the averaged estimator $\bar Z = \tfrac 1 k \sum_{i} Z_{i}$. Its expectation is  equal to the expectations of the $Z_{i}$, while its variance scales with $k$:
$$\expe{\theta}{\bar Z} = \expe\theta{Z_{1}} = \theta \pm O\Big(\frac k n\Big)^{3/2}$$
$$\textstyle \var\theta{\bar Z} = \dfrac{1}{k}\var\theta{Z_{i}} = \dfrac{1}{k} \cdot\dfrac{k}{n} \cdot\dfrac{1+o(1)}{I_{f}(\theta)} = \dfrac{1+o(1)}{nI_{f}(\theta)}$$

Recall that the mean squared error $J_{}(\theta)$ is the sum of the variance and squared bias of an estimator. Since the squared bias is $O(k^{3}/n^{3})$, it vanishes asymptotically 
compared to the variance as long as $k = o(n^{{2/3}})$, that is, as long as $k/n^{2/3}\to 0$.

Thus, for sufficiently small $k$, the estimator $\bar Z$ is efficient. We now consider for which values of $k$ the added noise is small enough so that it does not affect the efficiency of $T^*$. The noise added to $\bar Z$ to get $T^{*}$ does not contribute to the bias of the estimator, but does add to the variance. 
Specifically: $\expe{\theta}{T^{*}(X)} = \expe{\theta}{\bar Z} = \theta \pm O\left(\frac k n\right)^{3/2}$ and 
$$\textstyle \var\theta{T^{*}(X)} = \var\theta{\bar Z}  + \dfrac{\Lambda^{2}}{k^{2}\eps^{2}}= \dfrac{1}{n}\left(\dfrac{1+o(1)}{I_{f}(\theta)} +  \dfrac{n\Lambda^{2}}{k^{2}\eps^{2}}
\right)$$

$$J_{T^*}(\theta) = \dfrac{1}{n}\left(\dfrac{1+o(1)}{I_{f}(\theta)} +  \dfrac{n\Lambda^{2}}{k^{2}\eps^{2}} + \frac{k^3}{n^2}
\right)$$

If $\eps = \omega(n^{{-1/6}})$  then we can choose $k$ to ensure that $nJ_{T^{*}}(\theta) \to 1/I_{f}(\theta)$. We need $k = o(n^{2/3})$ to get sufficiently small bias and $k = \omega(\frac{\sqrt{n}}{\eps})$ to get the variance of the noise sufficiently low. Taking $k  = \lceil\frac{n^{3/5} \Lambda^{2/5} }{\eps^{2/5}}\rceil$ yields an asymptotic relative error that tends to 1, namely: 

$$ J_{T^*}(\theta) = \dfrac{1}{n}\left(\dfrac{1+o(1)}{I_{f}(\theta)} + O\left(\frac{\Lambda^{6/5}}{n^{1/5}\eps^{6/5}}\right)
\right)
$$ Since $\Lambda$ is constant with respect to $n$, $T^*$ is efficient as long as $\eps n^{1/6}\to \infty$, as desired. \mnote{add proof that $\bar Z$ converges to Normal.}
\end{proof}
%

%



\subsection*{Acknowledgements}

I am grateful to many colleagues in both statistics and computer science for helpful discussions about this project. I would especially like to thank Bing Li, from Penn State's  Department of Statistics, for insightful conversations about bias correction and asymptotic expansions of statistical functionals.

\small
\bibliographystyle{abbrv}
\bibliography{../bib/privacy,../bib/nsf05542Slavkovic,../bib/pacparity}

\end{document}